\newtheorem{definition}{Definition}[section]
\newtheorem{theorem}{Theorem}
\newcommand{\R}{\mathbb{R}}
\newcommand{\G}{\mathcal{G}}
\newcommand{\ggd}{\mathrm{GGD}}
\newcommand{\gmd}{\mathrm{GMD}}
\newcommand{\cost}{\mathrm{Cost}}
\newcommand{\len}[1]{{\lvert\overline{#1}\rvert}}
\newcommand{\eqdef}{\stackrel{\text{def}}{=}}
\newcommand{\thmref}[1]{Theorem~\ref{thm:#1}}
\newcommand{\figref}[1]{Fig.~\ref{fig:#1}}
\newcommand{\secref}[1]{Section~\ref{sec:#1}}
\newcommand{\eqnref}[1]{(\ref{eqn:#1})}
\newcommand{\tabref}[1]{Table~\ref{table:#1}}
\title{Graph Mover's Distance: An Efficiently Computable Distance Measure for Geometric Graphs}
\date{}
\author{
    Sushovan Majhi\thanks{School of Information, 
    University of California, Berkeley, USA, 
    \tt{smajhi@berkeley.edu}}}
\begin{document}
\thispagestyle{empty}
\maketitle

\begin{abstract}
Many applications in pattern recognition represent patterns as a geometric
graph. The geometric graph distance (GGD) has recently been studied in
\cite{majhi2022distance} as a meaningful measure of similarity between two
geometric graphs. Since computing the GGD is known to be $\mathcal{NP}$--hard,
the distance measure proves an impractical choice for applications. As a
computationally tractable alternative, we propose in this paper the Graph
Mover's Distance (GMD), which has been formulated as an instance of the earth
mover's distance. The computation of the GMD between two geometric graphs with
at most $n$ vertices takes only $O(n^3)$-time. Alongside studying the metric
properties of the GMD, we investigate the stability of the GGD and GMD. The GMD
also demonstrates extremely promising empirical evidence at recognizing letter
drawings from the {\tt LETTER} dataset \cite{da_vitoria_lobo_iam_2008}. 
\end{abstract}

\section{Introduction}
Graphs have been a widely accepted object for providing structural
representation of patterns involving relational properties. While hierarchical
patterns are commonly reduced to a string \cite{FU1971155} or a tree
representation \cite{1672247}, non-hierarchical patterns generally require a
graph representation. The problem of pattern recognition in such a
representation then requires quantifying (dis-)similarity between a query graph
and a model or prototype graph.
Defining a relevant distance measure for a class of graphs has been studied for
almost five decades now and has a myriad of applications including chemical
structure matching \cite{willett_similarity_1994}, fingerprint matching
\cite{raymond_effectiveness_2002}, face identification \cite{954601}, and symbol
recognition \cite{954603}.

Depending on the class of graphs of interest and the area of application,
several methods have been proposed. Graph isomorphisms
\cite{10.1145/321556.321562} or subgraph isomorphisms can be considered. These,
however, cannot cope with (sometimes minor) local and structural deformations of
the two graphs. To address this issue, several alternative distance measures
have been studied. We particularly mention \emph{edit distance}
\cite{sanfeliu_distance_1983,justice_binary_2006} and \emph{inexact matching
distance} \cite{bunke_inexact_1983}. Although these distance measures have been
battle-proven for attributed graphs (i.e., combinatorial graphs with finite
label sets), the formulations seem inadequate in providing meaningful similarity
measures for geometric graphs. 

A geometric graph  belongs to a special class of attributed graphs having an
embedding into a Euclidean space $\R^d$, where the vertex labels are
inferred from the Euclidean locations of the vertices and the edge labels are the Euclidean lengths of
the edges.

In the last decade, there has been a gain in practical applications involving
comparison of geometric graphs, such as road-network or map comparison
\cite{akpw-mca-15}, detection of chemical structures using their spatial bonding
geometry, etc. In addition, large datasets like \cite{da_vitoria_lobo_iam_2008}
are being curated by pattern recognition and machine learning communities.

\subsection{Related Work and Our Contribution}
We are inspired by the recently developed geometric graph distance (GGD) in
\cite{cgkss-msgg-09,majhi2022distance}. Although the GGD succeeds to be a
relevant distance measure for geometric graphs, its computation, unfortunately,
is known to be $\mathcal{NP}$-hard. Our motivation stems from applications that
demand an efficiently computable measure of similarity for geometric graphs. The
formulation of our graph mover's distance is based on the theoretical
underpinning of the GGD. The GMD provides a meaningful yet computationally
efficient similarity measure between two geometric graphs. 

In \secref{ggd}, we revisit the definition of the (GGD) to investigate its
stability under Hausdorff perturbation. \secref{gmd} is devoted to the study of
the GMD. The GMD has been shown to render a \emph{pseudo}-metric on the class of
(ordered) geometric graphs. Finally, we apply the GMD to classify letter
drawings in \secref{exp}. Our experiment involves matching each of $2250$ test
drawings, modeled as geometric graphs, to $15$ prototype letters from the
English alphabet. For the drawings with {\tt LOW} distortion, the correct letter
has been found among the top $3$ matches at a rate of $98.93\%$, where the
benchmark accuracy is $99.6\%$ obtained using a $k$-nearest neighbor classifier
($k$-NN) with the graph edit distance \cite{bunke_inexact_1983}.

\section{Geometric Graph Distance (GGD)}\label{sec:ggd} We first formally define
a geometric graph. Throughout the paper, the dimension of the ambient Euclidean
space is denoted by $d\geq1$. We also assume that the cost coefficients $C_V$
and $C_E$ are positive constants.
\begin{definition}[Geometric Graph]\label{def:graph} A \emph{geometric graph} of
$\R^d$ is a (finite) combinatorial graph $G=(V^G,E^G)$ with vertex set
$V^G\subset\R^d$, and the Euclidean straight-line segments
$\{\overline{ab}\mid (a,b)\in E^G\}$ intersect (possibly) at their
endpoints.
\end{definition}
We denote the set of all geometric graphs of $\R^d$ by $\G(\R^d)$. Two geometric
graphs $G=(V^G,E^G)$ and $H=(V^H,E^H)$ are said to be \emph{equal}, written
$G=H$, if and only if $V^G=V^H$ and $E^G=E^H$. We make no distinction between a
geometric graph $G=(V^G,E^G)$ and its \emph{geometric realization} as a subset
of $\R^d$; an edge $(u,v)\in E^G$ can be identified as the line-segment
$\overline{uv}$ in $\R^d$, and its length by the Euclidean length $\len{uv}$. 

Following the style of \cite{majhi2022distance}, we first revisit the definition
of GGD. The definition uses the notion of an inexact matching. In order to
denote a deleted vertex and a deleted edge, we introduce the \emph{dummy vertex}
$\epsilon_V$ and the \emph{dummy edge} $\epsilon_E$, respectively. 
\begin{definition}[Inexact Matching]\label{def:pi} Let $G,H\in\G(\R^d)$ be two
geometric graphs. A relation
$\pi\subseteq(V^G\cup\{\epsilon_V\})\times(V^H\cup\{\epsilon_V\})$ is called an
(inexact) matching if for any $u\in V^G$ (resp.~$v\in V^H$) there is exactly one
$v\in V^H\cup\{\epsilon_V\}$ (resp.~$u\in V^G\cup\{\epsilon_V\}$) such that
$(u,v)\in\pi$.
\end{definition}   

The set of all matchings between graphs $G,H$ is denoted by $\Pi(G,H)$.
Intuitively, a matching $\pi$ is a relation that covers the vertex sets
$V^G,V^H$ exactly once. As a result, when restricted to $V^G$ (resp.~$V^H$), a
matching $\pi$ can be expressed as a map $\pi:V^G\to V^H\cup\{\epsilon_V\}$
(resp.~$\pi^{-1}:V^H\to V^G\cup\{\epsilon_V\}$). In other words, when
$(u,v)\in\pi$ and $u\neq\epsilon_V$ (resp.~$v\neq\epsilon_V$), it is justified
to write $\pi(u)=v$ (resp.~$\pi^{-1}(v)=u$). It is evident from the definition
that the induced map $$\pi:\{u\in V^G\mid\pi(u)\neq\epsilon_V\}\to \{v\in
V^H\mid\pi^{-1}(v)\neq\epsilon_V\}$$ is a bijection. For edges $e=(u_1,u_2)\in
E^G$ and $f=(v_1,v_2)\in E^H$, we introduce the short-hand
$\pi(e):=(\pi(u_1),\pi(u_2))$ and $\pi^{-1}(f):=(\pi^{-1}(v_1),\pi^{-1}(v_2))$.

Another perspective of $\pi$ is to view it as a matching between portions of $G$
and $H$, (possibly) after applying some edits on the two graphs. For example,
$\pi(u)=\epsilon_V$ (resp.~$\pi^{-1}(v)=\epsilon_V$) encodes deletion of the
vertex $u$ from $G$ (resp.~$v$ from $H$), whereas $\pi(e)=\epsilon_E$
(resp.~$\pi^{-1}(f)=\epsilon_E$) encodes deletion of the edge $e$ from $G$
(resp.~$f$ from $H$). Once the above deletion operations have been performed on
the graphs, the resulting subgraphs of $G$ and $H$ become isomorphic, which are
finally matched by translating the remaining vertices $u$ to $\pi(u)$. Now, the
cost of the matching $\pi$ is defined as the total cost for all of these
operations:
\begin{definition}[Cost of a Matching] \label{def:split-ggd}Let $G,H\in\mathcal
G(\R^d)$ be geometric graphs and $\pi\in\Pi(G,H)$ an inexact matching. The cost
of $\pi$, is $\cost(\pi)=$
\begin{equation}\label{eqn:split-ggd}
\begin{split}
&\underbrace{\sum_{\substack{u\in V^G \\ \pi(u)\neq\epsilon_V}} 
C_V|u-\pi(u)|}_\text{vertex translations} + 
\underbrace{\sum_{\substack{e\in E^G \\ \pi(e)\neq\epsilon_E}} 
C_E\big||e|-|\pi(e)|\big|}_\text{edge translations}+ \quad\quad\underbrace{\sum_{\substack{e\in E^G \\ \pi(e)=\epsilon_E}} C_E|e|}
_\text{edge deletions} + \underbrace{\sum_{\substack{f\in E^H \\ \pi^{-1}(f)=\epsilon_E}} C_E|f|}_\text{edge deletions}.        
\end{split}
\end{equation} 
\end{definition}

\begin{definition}[$\ggd$]\label{def:ggd} For geometric graphs $G,H\in\G(\R^d)$,
their geometric graph distance, $\ggd(G,H)$, is
\[\ggd(G,H)\eqdef\min_{\pi\in\Pi(G,H)}\cost(\pi)\;.\]
\end{definition}

\subsection{Stability of GGD}
A distance measure is said to be \emph{stable} if it does not change much if the
inputs are \emph{perturbed} only slightly. Usually, the change is expected to be
bounded above by the amount of perturbation inflicted on the inputs. The
perturbation is measured under a suitable choice of metric. In the context of
geometric graphs, it is natural to wonder if the GGD is stable under the
Hausdorff distance between two graphs. To our disappointment, we can easily see
for the graphs shown in \figref{stable} that the GGD is positive, whereas the
Hausdorff distance between their realizations is zero. So, the Hausdorff
distance between the graphs can not bound their GGD from above.
\begin{figure}[tbh]
    \centering
    \begin{tikzpicture}[scale=0.8]
    \foreach \i in {-1,...,5} { \draw [very thin, gray, |-|] (\i,0) -- (\i+1,0); }
    \foreach \i in {-1,...,5} { \draw [very thin, gray, |-|] (\i,1) -- (\i+1,1); }
    \filldraw[thick] (0,0) circle (2pt) node[above] {$v_1$} -- (4,0) circle (2pt) 
    node[above] {$v_2$}; 
    \filldraw[thick] (0,1) circle (2pt) node[above] {$u_1$} -- (3,1) circle (2pt) 
    node[above] {$u_2$} -- (4,1) circle (2pt) node[above] {$u_3$}; 
    \node[below] at (2.5,0) {$H$};
    \node[below] at (1.5,1) {$G$}; 
    \end{tikzpicture}
    \caption{The graphs $G$ (top) and $H$ (bottom) are embedded in the real line;
    the distance between consecutive ticks is $1$ unit. The Hausdorff distance
    between $G$ and $H$ is zero, however $\ggd(G,H)=C_V + C_E$ is non-zero. The
    optimal matching is given by $\pi(u_1)=v_1$, $\pi(u_2)=v_2$, and
    $\pi(u_3)=\epsilon_V$.}
    \label{fig:stable}
\end{figure}
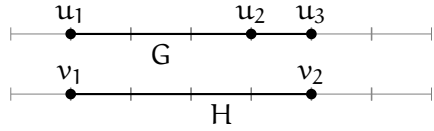
    
One might think that the $\ggd$ is stable when the Hausdorff distance only
between the vertices is considered. However, the graphs in \figref{stable-1}
indicate otherwise.
\begin{figure}[thb]
\centering 
\begin{tikzpicture}[scale=0.75]
\foreach \i in {0,...,3} { \draw [very thin, gray] (\i-4,0) -- (\i-4,3)  node
[below] at (\i-4,0) {\footnotesize$\i$}; } \foreach \i in {0,...,3} { \draw
[very thin, gray] (-4,\i) -- (-1,\i) node [left] at (-4,\i) {\footnotesize$\i$};
} \filldraw (-4,2) circle (2pt) node[anchor=south west] {$u_1$} -- (-2,0) circle
(2pt) node[anchor=south west] {$u_3$} -- (-4,0) circle (2pt) node[anchor=south
west] {$u_2$}; \foreach \i in {0,...,3} { \draw [very thin, gray] (\i+1,0) --
(\i+1,3)  node [below] at (\i+1,0) {\footnotesize$\i$}; }
\foreach \i in {0,...,3} { \draw [very thin, gray] (1,\i) -- (4,\i) node [left]
at (1,\i) {\footnotesize$\i$}; }
\filldraw (3,0) circle (2pt) node[anchor=south west] {$v_3$} -- (1,2) circle
(2pt) node[anchor=south west] {$v_1$} -- (1,0) circle (2pt) node[anchor=south
west] {$v_2$};
\end{tikzpicture}
\caption{For the graphs $G,H\in\G(\R^2)$, the Hausdorff distance between the
vertex sets is zero, however $\ggd(G,H)=4C_E$ is non-zero. The optimal matching
is given by $\pi(u_1)=v_1$, $\pi(u_3)=v_3$, $\pi(u_2)=\epsilon_V$, and
$\pi^{-1}(v_2)=\epsilon_V$.}
\label{fig:stable-1}
\end{figure}
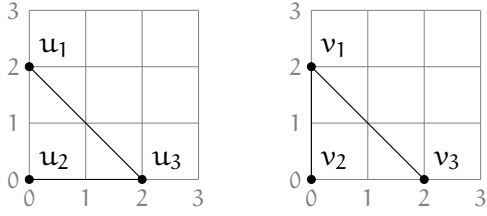

Under strong requirements, however, it is not difficult to prove the following
result on the stability of $\ggd$ under the Hausdorff distance.
\begin{theorem}[Hausdorff Stability of GGD]\label{thm:ggd}
Let $G,H\in\G(\R^d)$ be geometric graphs with a graph isomorphism $\pi:V^G\to
V^H$. If $\delta>0$ is such that $|u-\pi(u)|\leq\delta$ for all $u\in V^G$, then
\[ \ggd(G,H)\leq C_V|V^G|\delta. \]
\end{theorem}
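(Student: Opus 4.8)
The plan is to sidestep the minimization over $\Pi(G,H)$ entirely and certify the inequality with one concrete matching: the given isomorphism $\pi$ itself, regarded as a relation $\pi\subseteq(V^G\cup\{\epsilon_V\})\times(V^H\cup\{\epsilon_V\})$. Since $\ggd$ is a minimum over $\Pi(G,H)$, the cost of any single legal matching is an upper bound, so it suffices to show $\cost(\pi)\le C_V|V^G|\delta$. First I would check that $\pi$ is a legal inexact matching in the sense of \defref{pi}: because $\pi\colon V^G\to V^H$ is a bijection, every $u\in V^G$ has exactly one partner $\pi(u)\in V^H$ and every $v\in V^H$ has exactly one partner $\pi^{-1}(v)\in V^G$, with the dummy vertex $\epsilon_V$ never used, so $\pi\in\Pi(G,H)$. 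Moreover, since $\pi$ is a graph \emph{isomorphism}, for every $e=(u_1,u_2)\in E^G$ the pair $\pi(e)=(\pi(u_1),\pi(u_2))$ is an edge of $H$, hence $\pi(e)\neq\epsilon_E$, and symmetrically $\pi^{-1}(f)\in E^G$ for every $f\in E^H$. Thus in \eqnref{split-ggd} both edge-deletion sums are empty.

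Consequently $\cost(\pi)$ collapses to just the vertex-translation sum plus the edge-translation sum:
\[
\ggd(G,H)\;\le\;\cost(\pi)\;=\;\sum_{u\in V^G}C_V\,|u-\pi(u)|\;+\;\sum_{e\in E^G}C_E\,\bigl||e|-|\pi(e)|\bigr|.
\]
The first sum is handled immediately by the hypothesis: each summand is at most $C_V\delta$ and there are $|V^G|$ of them, giving $C_V|V^G|\delta$, which is exactly the asserted right-hand side.

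So the whole content sits in the edge-translation sum, and that is the step I expect to be the crux. For $e=(u_1,u_2)$, applying the triangle inequality in $\R^d$ twice yields $\bigl||u_1-u_2|-|\pi(u_1)-\pi(u_2)|\bigr|\le|u_1-\pi(u_1)|+|u_2-\pi(u_2)|\le 2\delta$, so the edge sum is at most $2C_E|E^G|\delta$. To land exactly on the stated bound $C_V|V^G|\delta$ one therefore needs this term to drop out, which happens precisely when the isomorphism is also length-preserving, i.e. $|e|=|\pi(e)|$ for all $e\in E^G$; I expect this (or some equivalent ``strong requirement'' reading) to be the hypothesis doing the work, since otherwise the honest estimate is $\ggd(G,H)\le C_V|V^G|\delta+2C_E|E^G|\delta$. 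Either way, the only real observation is the bookkeeping that an isomorphism induces a deletion-free matching, after which everything reduces to the triangle inequality and counting.
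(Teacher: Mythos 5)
You take exactly the paper's route: use the isomorphism $\pi$ itself as the certifying matching, note that it induces no vertex or edge deletions, and bound the vertex-translation sum by $C_V|V^G|\delta$. The concern you raise about the edge-translation term is the right one, and it points at a gap in the \emph{paper's} proof rather than in your reasoning: the printed argument asserts $\cost(\pi)=\sum_{u\in V^G}C_V|u-\pi(u)|$, silently discarding $\sum_{e\in E^G}C_E\bigl||e|-|\pi(e)|\bigr|$. That term vanishes only if $\pi$ preserves edge lengths, which is automatic if ``graph isomorphism'' is read in the attributed-graph sense (the paper declares edge labels to be Euclidean lengths, and the companion GMD stability theorem does impose the explicit hypothesis $e^G_{i,j}=e^H_{\pi(i),\pi(j)}$). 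Under a purely combinatorial reading the statement is false: let $G$ be a unit segment and $H$ a segment of length $1+\delta$ sharing one endpoint with $G$; the obvious isomorphism moves each vertex by at most $\delta$, yet $\ggd(G,H)=(C_V+C_E)\delta$, which exceeds $C_V|V^G|\delta=2C_V\delta$ whenever $C_E>C_V$. So your fallback estimate $C_V|V^G|\delta+2C_E|E^G|\delta$ is what the stated hypotheses actually deliver, and the length-preservation reading you identify is the one under which the theorem as printed (and the paper's one-line cost computation) is correct.
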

\begin{proof}
The given graph isomorphism $\pi$ is a bijective mapping between the vertices of
$G$ and $H$. So, $\pi\in\Pi(G,H)$, i.e., it defines an inexact matching. Since
$\pi$ is a graph isomorphism, it does not delete any vertex or edge. More
formally, for all $u\in V^G$ and $v\in V^H$, we have $\pi(u)\neq\epsilon_V$ and
$\pi^{-1}(v)\neq\epsilon_V$, respectively. Also, for all $e\in E^G$ and $f\in
E^H$, we have $\pi(e)\neq\epsilon_E$ and $\pi^{-1}(f)\neq\epsilon_E$,
respectively. From \eqnref{split-ggd}, the cost 
\[
\cost(\pi)=\sum_{u\in V^G} C_V|u-\pi(u)|\leq C_V|V^G|\delta.
\]
So, $\ggd(G,H)\leq\cost(\pi)\leq C_V|V^G|\delta$.
\end{proof}

\section{Graph Mover's Distance (GMD)}\label{sec:gmd} We define the \emph{Graph
Mover's Distance} for two ordered geometric graphs. A geometric graph is called
\emph{ordered} if its vertices are ordered or indexed. In that case, we denote
the vertex set as a (finite) sequence $V^G=\{u_i\}_{i=1}^m$. Let us denote by
$\G^O(\R^d)$ the set of all ordered geometric graphs of $\R^d$. The formulation
of the GMD uses the framework known as the earth mover's distance (EMD).

\subsection{Earth Mover's Distance (EMD)}
The EMD is a well-studied distance measure between weighted point sets, with
many successful applications in a variety of domains; for example, see
\cite{hargreaves_earth_2020,kusner_word_2015,ren_robust_2011,rubner_earth_2000}.
The idea of the EMD was first conceived by Monge \cite{monge_memoire_1781} in
1781, in the context of transportation theory. The name ``earth mover's
distance'' was coined only recently, and is well-justified due to the following
analogy. The first weighted point set can be thought of as piles of earth (dirt)
lying on the point sites, with the weight of a site indicating the amount of
earth; whereas, the other point set as pits of volumes given by the
corresponding weights. Given that the total amount of earth in the piles equals
the total volume of the pits, the EMD computes the least (cumulative) cost
needed to fill all the pits with earth. Here, a unit of cost corresponds to
moving a unit of earth by a unit of ``ground distance'' between the pile and the
pit.

The EMD can be cast as a transportation problem on a bipartite graph, which has
several efficient implementations, e.g., the network simplex algorithm
\cite{ahuja_network_2013,pele_linear_2008}. Let the weighted point sets
$P=\{(p_i, w_{p_i})\}_{i=1}^m$ and $Q=\{(q_j, w_{q_j})\}_{j=1}^n$ be a set of
suppliers and a set of consumers, respectively. The weight $w_{p_i}$ denotes the
total supply of the supplier $p_i$, and $w_{q_j}$ the total demand of the
consumer $q_j$. The matrix $[d_{i,j}]$ is the matrix of ground distances, where
$d_{i,j}$ denotes the cost of transporting a unit of supply from $p_i$ to $q_j$.
We also assume the \emph{feasibility condition} that the total supply equals the
total demand: 
\begin{equation}\label{eqn:feasibility}
    \sum_{i=1}^m w_{p_i} = \sum_{j=1}^n w_{q_j}\;.
\end{equation}
A \emph{flow} of supply is given by a matrix $[f_{i,j}]$ with $f_{i,j}$ denoting
the units of supply transported from $p_i$ to $q_j$. We want to find a flow that
minimizes the overall cost
\[
    \sum_{i=1}^m\sum_{j=1}^n f_{i,j}d_{i,j}
\]
subject to:
\begin{align}
    &f_{i,j}\geq 0\text{ for any }i=1,\ldots,m\text{ and }j=1,\dots,n \label{eqn:emd-1} \\
    &\sum_{j=1}^n f_{i,j} = w_i\text{ for any }i=1,\ldots,m \label{eqn:emd-2}\\
    &\sum_{i=1}^m  f_{i,j} = w_j\text{ for any }j=1,\ldots,n \label{eqn:emd-3},
\end{align}
Constraint \eqnref{emd-1} ensures a flow of units from $P$ to $Q$, and not vice
versa; constraint \eqnref{emd-2} dictates that a supplier must send all its
supply---not more or less; constraint \eqnref{emd-3} guarantees that the demand
of every consumer is exactly fulfilled. 

The \emph{earth mover's distance} (EMD) is then defined by the cost of the
optimal flow. A solution always exists, provided condition \eqnref{feasibility}
is satisfied. The weights and the ground distances can be chosen to be any
non-negative numbers. However, we choose them appropriately in order to solve
our graph matching problem.
\begin{figure}[thb]
\centering
\begin{tikzpicture}[scale=1]
\filldraw[gray, shorten >=0.4cm, ->] (-2,2) -- (2, 2);
\filldraw[gray, shorten >=0.4cm, ->] (-2,2) -- (2, 0.5);
\filldraw[red, shorten >=0.4cm, ->] (-2,2) -- 
node[near end, fill=white, inner sep = 1] {$1$} (2, -1);
\filldraw[gray, shorten >=0.3cm, ->] (-2,1) -- (2, 2);
\filldraw[red, shorten >=0.3cm, ->] (-2,1) -- 
node[near end, fill=white, inner sep = 1] {$1$} (2, 0.5);
\filldraw[gray, shorten >=0.3cm, ->] (-2,1) -- (2, -1);
\filldraw[red, shorten >=0.2cm, ->] (-2,0) -- 
node[near end, fill=white, inner sep = 1] {$1$} (2, 2);
\filldraw[gray, shorten >=0.2cm, ->] (-2,0) -- (2, 0.5);
\filldraw[gray, shorten >=0.2cm, ->] (-2,0) -- (2, -1);
\filldraw[gray, shorten >=0.1cm, ->] (-2,-1) -- (2, 2);
\filldraw[gray, shorten >=0.1cm, ->] (-2,-1) -- (2, 0.5);
\filldraw[red, shorten >=0.1cm, ->] (-2,-1) -- 
node[near end, fill=white, inner sep = 1] {$2$} (2, -1);

\fill[] (-2,2) circle (2pt) node[anchor = east] {$u_1$} node[anchor = north] {$1$};
\fill[] (-2,1) circle (2pt) node[anchor = east] {$u_2$} node[anchor = north] {$1$};
\fill[] (-2,0) circle (2pt) node[anchor = east] {$u_3$} node[anchor = north] {$1$};
\fill[gray] (-2,-1) circle (2pt) node[anchor = east] {$u_4$} node[anchor = north] {$2$};
\fill[] (2,2) circle (2pt) node[anchor = west] {$v_1$} node[anchor = north] {$1$};
\fill[] (2, 0.5) circle (2pt) node[anchor = west] {$v_2$} node[anchor = north] {$1$};
\fill[gray] (2,-1) circle (2pt) node[anchor = west] {$v_3$} node[anchor = north] {$3$};

\end{tikzpicture}
\caption{The bipartite network used by the $\gmd$ is shown for two ordered
graphs $G,H$ with vertex sets $V^G=\{u_1,u_2,u_3\}$ and $V^H=\{v_1,v_2\}$,
respectively. The dummy nodes $u_4$ for $G$ and $v_3$ for $H$, respectively,
have been shown in gray. Below each node, the corresponding weights are shown. A
particular flow has been depicted here. The gray edges do not transport
anything. A red edge has a non-zero flow with the transported units shown on
them.}
\label{fig:gmd}
\end{figure}
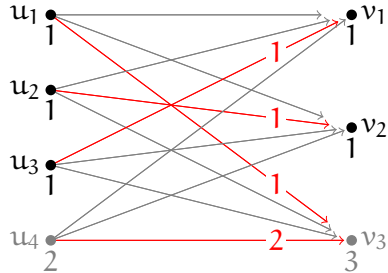

\subsection{Defining the GMD}
Let $G,H\in\G^O(\R^d)$ be two ordered geometric graphs of $\R^d$ with
$V^G=\{u_i\}_{i=1}^m$ and $V^H=\{v_j\}_{j=1}^n$. For each $i=1,\ldots,m$, let
$E^G_i$ denote the (row) $m$--vector containing the lengths of (ordered) edges
incident to the vertex $u_i$ of $G$. More precisely, the
\[
k\text{th element of }E^G_i = \begin{cases}|e^G_{i,k}|,\text{ if }e^G_{i,k}:=(u_i, u_k)\in E^G\\
0,\text{ otherwise.}    
\end{cases}
\]
Similarly, for each $j=1,\ldots,n$, we define $E^H_j$ to be the (row) $n$--vector with
the 
\[
k\text{th element of }E^H_j = \begin{cases}|e^H_{j,k}|,\text{ if }e^H_{j,k}:=(v_j, v_k)\in E^H\\
0,\text{ otherwise.}
\end{cases}
\]
In order to formulate the desired instance of the EMD, we take the point sets to
be $P=\{u_i\}_{i=1}^{m+1}$ and $Q=\{v_j\}_{j=1}^{n+1}$. Here, $u_{m+1}$ and
$v_{n+1}$ have been taken to be a dummy supplier and dummy consumer,
respectively, to incorporate vertex deletion into our GMD framework. The weights
on the sites are defined as follows:
\[
    w_{u_i}=1\text{ for }i=1\ldots,m\text{ and }w_{u_{m+1}}=n\;.
\]
And,
\[
    w_{v_j}=1\text{ for }j=1\ldots,n\text{ and }w_{v_{m+1}}=m\;.   
\]
We note that the feasibility condition \eqnref{feasibility} is satisfied: $m+n$
is the total weight for both $P$ and $Q$. An instance of the transportation
problem is depicted in \figref{gmd}. 

Finally, the ground distance from $u_i$ to $v_j$ is defined by:
\[
    d_{i,j}=\begin{cases}C_V|u_i-v_j| 
    +&C_E\|E^G_iD_{m\times p}-E^H_jD_{n\times p}\|_1,\\
    &\text{ if }1\leq i\leq m,1\leq j\leq n\\
    C_E\|E^H_j\|_1,&\text{ if }i=m+1\text{ and } 1\leq j\leq n \\
    C_E\|E^G_i\|_1,&\text{ if }1\leq i\leq m\text{ and }j=n+1\\
    0,&\text{ otherwise}.
    \end{cases}
\]
Here, $p=\min\{m,n\}$, the $1$--norm of a row vector is denoted by
$\|\cdot\|_1$, and $D$ denotes a diagonal matrix with the all diagonal entries
being $1$. 

\begin{figure}[thb]
\centering 
\begin{tikzpicture}[scale=1]
\foreach \i in {0,...,2} { \draw [very thin, gray] (\i-4,0) -- (\i-4,2)  node
[below] at (\i-4,0) {\footnotesize$\i$}; } \foreach \i in {0,...,2} { \draw
[very thin, gray] (-4,\i) -- (-2,\i) node [left] at (-4,\i) {\footnotesize$\i$};
} 

\filldraw (-4,2) circle (2pt) node[anchor=south west] {$u_4$} -- (-2,2) circle
(2pt) node[anchor=south west] {$u_1$} -- (-3,1) circle (2pt)
node[anchor=west] {$u_5$} -- (-4,0) circle (2pt) node[anchor=south] {$u_2$} --
(-2,0) circle (2pt) node[anchor=south west] {$u_3$}; 

\foreach \i in {0,...,2} { \draw [very thin, gray] (\i+1,0) -- (\i+1,2)  node
[below] at (\i+1,0) {\footnotesize$\i$}; }
\foreach \i in {0,...,2} { \draw [very thin, gray] (1,\i) -- (3,\i) node [left]
at (1,\i) {\footnotesize$\i$}; }
\filldraw (1,2) circle (2pt) node[anchor=south west] {$v_3$} -- (1,0) circle
(2pt) node[anchor=west] {$v_1$} -- (2,1) circle (2pt)
node[anchor=west] {$v_5$} -- (3,2) circle (2pt) node[anchor=south] {$v_2$} --
(3,0) circle (2pt) node[anchor=south west] {$v_4$};
\node at (-2.5,-.5) {$G$};
\node at (2.5,-.5) {$H$};
\end{tikzpicture}
\caption{For the geometric graph $G,H\in\G^O(\R^2)$, the GMD is zero. The
optimal flow is given by the matching $\pi(u_1)=v_2$, $\pi(u_2)=v_1$,
$\pi(u_3)=v_4$, $\pi(u_4)=v_3$, and $\pi(u_5)=v_5$.}
\label{fig:separability}
\end{figure}
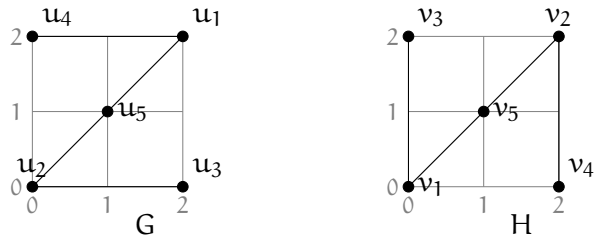
    
\subsection{Metric Properties}
We can see that the GMD induces a pseudo-metric on the space of ordered
geometric graphs $\G^O(\R^d)$. Non-negativity, symmetry, and triangle inequality
follow from those of the cost matrix $[d_{i,j}]$ defined in the GMD.

In addition, we note that $G=H$ (as ordered graphs) implies that $d_{i,j}=0$
whenever $i=j$. The trivial flow, where each $u_i$ sends its full supply to
$v_i$, has a zero cost. So, $\gmd(G,H)=0$. The GMD does not, however, satisfy
the separability condition on $\G^O(\R^d)$.

For the graphs $G,H$ shown in \figref{separability}, we have $\gmd(G,H)=0$. We
note that $G,H$ have the following adjacency length matrices $[E^G_{i}]_i$ and
$[E^H_{j}]_j$, respectively:
\[
\begin{bmatrix}
0 & 0 & 0 &  2& \sqrt{2}\\
0 & 0 & 2 &  0& \sqrt{2}\\
0 & 2 & 0 &  0& 0\\
2 & 0 & 0 &  0& 0\\
\sqrt{2} & \sqrt{2} & 0 &  0& 0
\end{bmatrix}\text{ and }
\begin{bmatrix}
0 & 0 & 2 &  0& \sqrt{2}\\
0 & 0 & 0 &  2& \sqrt{2}\\
2 & 0 & 0 &  0& 0\\
0 & 2 & 0 &  0& 0\\
\sqrt{2} & \sqrt{2} & 0 &  0& 0
\end{bmatrix}\;.
\]
It can be easily checked that the flow that transports a unit of supply from
$u_1\mapsto v_2$, $u_2\mapsto v_1$, $u_3\mapsto v_4$, $u_4\mapsto v_3$,
$u_5\mapsto v_5$, and five units from $u_6\mapsto v_6$ has total cost zero. So,
$\gmd(G,H)=0$. However, the graphs $G$ and $H$ are not the same geometric graph.
The fact that $\ggd(G,H)\neq0$ implies the GGD is not stable under the GMD.

One can easily find even simpler configurations for two distinct geometric
graphs with a zero GMD---if the graphs are allowed to have multiple connected
components.

We conclude this section by stating a stability result for the GMD under the
Hausdorff distance. We omit the proof, since it uses a similar argument
presented in \thmref{ggd}. 
\begin{theorem}[Hausdorff Stability of GMD]
Let $G,H\in\G^O(\R^d)$ be ordered geometric graphs with a bijection $\pi:V^G\to
V^H$ such that $e^G_{i,j}=e^H_{\pi(i),\pi(j)}$ for all $i,j$. If $\delta>0$ is
such that $|u_i-\pi(u_i)|\leq\delta$ for all $u_i\in V^G$, then
\[ \gmd(G,H)\leq C_V|V^G|\delta. \]
\end{theorem}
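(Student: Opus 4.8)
The plan is to mimic the proof of \thmref{ggd}: the hypothesis hands us a distinguished matching, which we promote to a single feasible flow in the earth mover's instance defining $\gmd(G,H)$, and then we bound that flow's cost by $C_V|V^G|\delta$. Since $\gmd(G,H)$ is by definition the minimum cost over all feasible flows, the same bound follows for $\gmd(G,H)$.

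Set $m=|V^G|$ and $n=|V^H|$; since $\pi$ is a bijection, $m=n$, and because the graphs are \emph{ordered} we read the hypothesis as saying the bijection is compatible with the indexing, i.e.\ $\pi(u_i)=v_i$ for every $i=1,\dots,m$. Then $e^G_{i,j}=e^H_{\pi(i),\pi(j)}$ becomes $|e^G_{i,j}|=|e^H_{i,j}|$ for all $i,j$, so the adjacency-length vectors coincide: $E^G_i=E^H_i$ for each $i$. Now I would define the flow $[f_{i,j}]$ on $P=\{u_i\}_{i=1}^{m+1}$, $Q=\{v_j\}_{j=1}^{n+1}$ by $f_{i,i}=1$ for $1\le i\le m$, $f_{m+1,n+1}=m$, and $f_{i,j}=0$ otherwise.

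The next step is to check that $[f_{i,j}]$ is feasible, i.e.\ that it meets \eqnref{emd-1}--\eqnref{emd-3}: non-negativity is immediate; each real supplier $u_i$ ships exactly its supply $w_{u_i}=1$ along $f_{i,i}$, and the dummy supplier $u_{m+1}$ ships exactly its supply $w_{u_{m+1}}=n=m$ along $f_{m+1,n+1}$; by the symmetric bookkeeping every real consumer $v_j$ receives exactly $w_{v_j}=1$ and the dummy consumer $v_{n+1}$ receives exactly $m$. Then I would evaluate the cost $\sum_{i,j}f_{i,j}d_{i,j}=\sum_{i=1}^m d_{i,i}+m\,d_{m+1,n+1}$. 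By definition $d_{m+1,n+1}=0$, and for $1\le i\le m$, since $p=\min\{m,n\}=m$ makes $D_{m\times p}=D_{n\times p}$ the identity and $E^G_i=E^H_i$,
\[
d_{i,i}=C_V|u_i-v_i|+C_E\big\|E^G_iD_{m\times p}-E^H_iD_{n\times p}\big\|_1
=C_V|u_i-v_i|\le C_V\delta .
\]
Summing over $i$ gives total cost $\le C_V m\delta=C_V|V^G|\delta$, hence $\gmd(G,H)\le C_V|V^G|\delta$.

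The only step with any subtlety is the vanishing of the edge-length contribution $C_E\|E^G_iD_{m\times p}-E^H_iD_{n\times p}\|_1$: this is exactly what the length-preservation hypothesis buys us, but it forces care about the ordering, since $\|\cdot\|_1$ compares the two length vectors entry-by-entry in their given order, so one genuinely needs $\pi$ to respect the vertex indexing (not merely to be an abstract length-preserving bijection) for the term to be identically zero. Everything else is the same flow-construction-and-accounting argument as in \thmref{ggd}, with the dummy supplier/consumer pair absorbing the excess mass at zero cost.
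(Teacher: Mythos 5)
Your proposal is correct and follows exactly the route the paper intends: it omits the proof with the remark that it mirrors Theorem~1, i.e.\ promote the given bijection to the diagonal flow (with the dummy pair absorbing the excess mass at zero cost) and bound the cost by the vertex-translation terms alone. Your flagged subtlety is also well observed: since $\|\cdot\|_1$ compares $E^G_i$ and $E^H_{\pi(i)}$ entrywise, the edge term vanishes only when $\pi$ is compatible with the vertex indexing, so the hypothesis must indeed be read that way for the stated bound to follow.
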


\subsection{Computing the GMD}
As pointed out earlier, the GMD can be computed as an instance of transportation
problem---using, for example, the network simplex algorithm. If the graphs have
at most $n$ vertices, computing the ground cost matrix $[d_{i,j}]$ takes
$O(n^3)$-time. Since the bipartite network has $O(n)$ vertices and $O(n^2)$
edges, the simplex algorithm runs with a time complexity of $O(n^3)$, with a
pretty good constant. Overall, the time complexity of the GMD is $O(n^3)$. 

\section{Experimental Results}\label{sec:exp} We have implemented the GMD in
Python, using network simplex algorithm from the {\tt networkx} package. We ran
a pattern retrieval experiment on letter drawings from the IAM Graph Database
\cite{da_vitoria_lobo_iam_2008}. The repository provides an extensive collection
of graphs, both geometric and labeled. 

In particular, we performed our experiment on the {\tt LETTER} database from the
repository. The graphs in the database represent distorted letter drawings. The
database considers only $15$ uppercase letters from the English alphabet: {\tt
A}, {\tt E}, {\tt F}, {\tt H}, {\tt I}, {\tt K}, {\tt L}, {\tt M}, {\tt N}, {\tt
T}, {\tt V}, {\tt W}, {\tt X}, {\tt Y}, and {\tt Z}. For each letter, a
prototype line drawing has been manually constructed. On the prototypes,
distortions are applied with three different level of strengths: {\tt LOW}, {\tt
MED}, and {\tt HIGH}, in order to produce $2250$ letter graphs for each level.
Each test letter drawing is a graph with straight-line edges; each node is
labeled with its two-dimensional coordinates. Since some of the graphs in the
dataset were not embedded, we had to compute the intersections of the
intersecting edges and label them as nodes. The preprocessing guaranteed that
all the considered graphs were geometric; a prototype and a distorted graph are
shown in \figref{letter}.
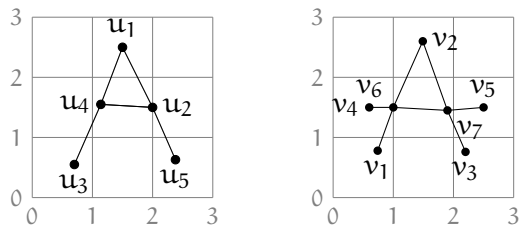
\begin{figure}[tbh]
\centering 
\begin{tikzpicture}[scale=0.8]
        \foreach \i in {0,...,3} { \draw [very thin, gray] (\i-4,0) -- (\i-4,3)  node
        [below] at (\i-4,0) {\footnotesize$\i$}; } \foreach \i in {0,...,3} { \draw
        [very thin, gray] (-4,\i) -- (-1,\i) node [left] at (-4,\i) {\footnotesize$\i$};
        } 
\filldraw (-4+0.7,0.55) circle (2pt) node[anchor=north] {$u_3$} -- (-4+1.14,1.55)
circle (2pt) node[anchor=east] {$u_4$} -- (-4+1.5,0+2.5) circle (2pt)
node[anchor=south] {$u_1$} -- (-4+2,0+1.5) circle (2pt) node[anchor=west] {$u_2$} -- (-4+2.38,0.63)
circle (2pt) node[anchor=north] {$u_5$}; 
\filldraw (-4+2,0+1.5) --  (-4+1.14,1.55);

\foreach \i in {0,...,3} { \draw [very thin, gray] (\i+1,0) --
        (\i+1,3)  node [below] at (\i+1,0) {\footnotesize$\i$}; }
        \foreach \i in {0,...,3} { \draw [very thin, gray] (1,\i) -- (4,\i) node [left]
        at (1,\i) {\footnotesize$\i$}; }

\fill (1+.74,0.78) circle (2pt) node[anchor=north] {$v_1$};
\fill (1+1.49,2.6) circle (2pt) node[anchor=west] {$v_2$};
\fill (1+2.2,0.76) circle (2pt) node[anchor=north] {$v_3$};
\fill (1+.6,1.5) circle (2pt) node[anchor=east] {$v_4$};
\fill (1+2.5,1.5) circle (2pt) node[anchor=south] {$v_5$};
\fill (1+1,1.5) circle (2pt) node[anchor=south east] {$v_6$};
\fill (1+1.9,1.45) circle (2pt) node[anchor=north west] {$v_7$};
\draw (1+.74,0.78) -- (1+1,1.5);
\draw (1+1.49,2.6) -- (1+1,1.5); 
\draw (1+1.49,2.6) -- (1+1.9,1.45);
\draw (1+2.2,0.76) -- (1+1.9,1.45);
\draw (1+.6,1.5) -- (1+1,1.5);
\draw (1+1,1.5) -- (1+1.9,1.45);
\draw (1+2.5,1.5) -- (1+1.9,1.45); 
\end{tikzpicture}
\caption{The prototype geometric graph of the letter {\tt A} is shown on the
left. On the right, a ({\tt MED}) distorted letter {\tt A} is shown.}
\label{fig:letter}        
\end{figure}

We devised a classifier for these letter drawings using the GMD. For this
application, we chose $C_V=4.5$ and $C_E=1$. For a test letter, we computed its
GMD from the $15$ prototypes, then sorted the prototypes in an increasing order
of their distance to the test graph. We then check if the letter generating the
test graph is among the first $k$ prototypes. For each level of distortion and
various values of $k$, we present the rate at which the correct letter has been
found in the first $k$ models. The summary of the empirical results have been
shown in \tabref{result}. Although the graph edit distance based $k$-NN
classifier still outperforms the GMD by a very small margin, our results has
been extremely satisfactory.

\renewcommand{\arraystretch}{1.3}
\begin{table}
\centering
\begin{tabular}{  |m{5em} | m{4.5em} | m{4.5em} | m{4em} |} 
    \cline{2-4}
\multicolumn{1}{c|}{} & \multicolumn{3}{c|}{correct letter in first $k$ models
($\%$)} \\
    \hline
    Distortion & $k=1$ & $k=3$ & $k=5$ \\
    \hline
    {\tt LOW} & $96.66\%$ & $98.93\%$ & $99.37\%$\\
    \hline
    {\tt MED} & $66.66\%$ & $85.37\%$ & $91.15\%$ \\
    \hline
    {\tt HIGH} & $73.73\%$ & $90.48\%$ & $95.51\%$ \\
    \hline
\end{tabular}
\caption{Empirical result on the {\tt LETTER} dataset}
\label{table:result}
\end{table}
One possible reason why the GMD might fail to correctly classify some of the
graphs is that lacks the separability property as a metric. 

\section{Discussions}
We have successfully introduced an efficiently computable and meaningful
similarity measure for geometric graphs. However, the GMD lacks some of the
desirable properties, like separability and stability. The currently presented
stability results for the GGD and GMD have a factor that depends on the size of
the input graphs. The question remains if the distance measures are in fact
stable under much weaker conditions, possibly with constant factors on the right
side. It will also be interesting to study the exact class of geometric graphs
for which the GMD is, in fact, a metric.

\small
\bibliographystyle{abbrv}
\bibliography{references}
\end{document}